\newtheorem{theorem}{Theorem}
\newtheorem{lemma}[theorem]{Lemma}
\newcommand{\ie}{{\it i.e.}}
\newcommand{\eg}{{\it e.g.}}
\newcommand{\cf}{{\it cf.}}
\newif\ifdraft
\newcommand{\del}[1]{\textcolor{gray}{\sout{#1}}} %
\newcommand{\add}[1]{\textcolor{red}{#1}}         %
\newcommand{\addx}[1]{\textcolor{blue}{#1}}       %
\newcommand{\addz}[1]{\textcolor{red}{#1}}
\newcommand{\del}[1]{}
\newcommand{\add}[1]{{#1}}
\newcommand{\addx}[1]{{#1}}
\newcommand{\addy}[1]{{#1}}
\newcommand{\addz}[1]{{#1}}
\title{Reversible Programming: \\ A Case Study of 
Two 
String-Matching Algorithms}
\author{Robert Gl\"uck %
\institute{DIKU, Department of Computer Science,\\
University of Copenhagen, Denmark 
\email{glueck@acm.org}}
\and
Tetsuo Yokoyama
\institute{Dept.\ of Electronics and Communication Technology,\\
Nanzan University, Japan
\email{tyokoyama@acm.org}}}
\begin{document}
\maketitle

\begin{abstract}
String matching is a 
fundamental 
problem \add{in algorithm}.
This \add{study examines} the development and 
construction of two reversible string-matching algorithms:
a naive string-matching algorithm
and the Rabin--Karp algorithm. 
The algorithms \add{are used}
to introduce reversible computing concepts\add{, beginning} from basic reversible programming techniques to more advanced considerations about the injectivization of the polynomial hash\add{-}update function employed by the Rabin--Karp algorithm. 
The result\add{s }\add{are} two clean input-preserving reversible 
algorithms that require no additional space and have the same asymptotic time complexity as their classic irreversible originals.
This study aims to contribute to 
the body of reversible algorithms
and 
to the discipline of reversible programming.

\end{abstract}

\section{Introduction}
\label{sec:introduction}

Reversible computing is an unconventional computing paradigm in which all computations are forward and backward deterministic.
It complements existing mainstream programming paradigms that are forward deterministic, but usually backward nondeterministic, such as imperative and functional programming languages~\cite{GlYo22:RPLA}. Reversible computing is \add{required} when 
{the} deletion of information is considered harmful as in 
quantum-based computing and to overcome Landauer's physical limit~(for a summary see~\cite{DeVos:20,Krakovsky:21}).
\add{Additionally}, reversible computing 
is a sweet spot for 
studying non-standard semantics 
and %
program inversion, 
which concern fundamental questions \add{regarding}\del{about} program transformation~\cite{GlueckKlimov:94:LMMC}. 

The \del{aim of this
short}
contribution \add{of this study} is threefold:
\begin{itemize}
\item Introduce 
reversible computing concepts by 
a program development
in this
unconventional paradigm.
\item Explain
a new,
efficient reversible version of the Rabin--Karp algorithm for string matching.
\item Contribute to 
the advancement of a reversible programming discipline.
\end{itemize}
String matching is a 
fundamental algorithmic problem 
with a wide range of practical applications. The problem
is stated in a few lines (we follow established terminology~\cite{CLRS09:intro-alg}): 
\begin{quote}
Let $T[0..n-1]$ be a \emph{text} of length $n$ and $P[0..m-1]$ be a \emph{pattern} of length $m~(\leq n)$ where the elements of arrays $T$ and $P$ 
are characters drawn from a finite alphabet $\Sigma$ of size $d$. The arrays are also called \emph{strings} of characters.
A pattern $P$ occurs with a \emph{valid shift} $s$ in text $T$ if $T[s..s+m-1]=P[0..m-1]$. 
The \emph{string-matching problem} is to find all valid shifts of $P$ in~$T$.
\end{quote}

Today there 
are various string-matching algorithms all of which are defined in 
conventional languages.
In this  study, we develop
efficient
reversible 
string-matching algorithms, namely\add{,}
\add{a} naive algorithm and 
the more efficient Rabin--Karp algorithm~\cite{KaRa87}. \add{Although} the worst-case matching time of the latter is no better than \add{that} of the naive method, the Rabin--Karp algorithm is faster on average because it uses hash values for fast, approximate matches, and only in \add{the} case of a possible match\add{,} performs an exact comparison of the pattern and text at the current shift.
The use of a hash\add{-}update function that computes the next hash value from the current hash value (rolling hash)
makes the creation of an efficient and reversible Rabin--Karp algorithm more challenging than 
\add{that of} the naive algorithm.

We \add{develop} the reversible string-matching algorithms to explain 
reversible-computing concepts and how to solve the challenge posed by Rabin--Karp's hash-based method. \add{Specifically}, we use the standard reversible programming language\add{,} Janus\add{,} {with syntactic sugar} \add{to define}
the algorithms. Once the algorithms are written in a reversible language, they are guaranteed to be reversible.

More details about reversible computing can be found in the literature, {\eg}, \cite{YoAxGl:16:TCS}. 
This study \add{contributes to the existing literature}
\add{on} 
{clean}
reversible algorithms
{that do not rely on tracing}~({\eg}, \cite{AxYo15:sort,GlYo19:IPL}), including a reversible FFT~\cite{YoAG08a:CF}, Dijkstra's permutation encoder~\cite{YoAxGl:16:TCS}, {and language processors~\cite{YokoyamaGlueck:07:Janus,AxGl11LATA}}.

\section{A Reversible Naive String Matcher}
\label{sec:naivematcher}

We begin with the naive matcher to \add{demonstrate} how to \add{proceed with} reversible programming. The naive matcher develop\add{ed} in this section will later be \add{employed} in the reversible Rabin--Karp algorithm.
\add{Advanced} considerations, such as the injectivization of the hash\add{-}update function, are \add{discussed in} the next section about the Rabin--Karp algorithm.

Algorithms written in a reversible language cannot delete information, \addx{but can} \emph{reversibly update} information. No deletion of information means that programs written in a reversible imperative language cannot contain assignments that overwrite values, such as \mbox{\texttt{x := y + z}}, only {reversibly update} values, such as \mbox{\texttt{x += y}} (shorthand for \mbox{\texttt{x := x + y}}).\footnote{Any expression $e$ can be used in \mbox{\texttt{x += }$e$}.
\add{Generally, to} be reversible,
\texttt{x} must not occur in $e$ ({\eg},
\mbox{\texttt{x += -x}} is not reversible).}
\add{Consequently,} reversible languages can only be {as} computationally powerful as \emph{reversible Turing machines} (RTMs)~\cite{Bennett:73}, which \add{exactly} compute 
the \emph{computable injective {functions}}. This injectiveness constraint makes reversible Turing machines strictly less powerful \add{compared to} their traditional counterparts that are universal.
This \add{is} a \add{significant} limitation of reversible computing\add{;} \add{however,} all non-injective functions can be \emph{embedded} in injective \add{functions}.

\add{There} are two approaches 
\add{for} implementing a function in 
reversible language. The first approach is to \add{begin} from an implementation written in a conventional (irreversible) language and \emph{reversibilize} it into a program written in a reversible programming language, {\eg}, by recording the information otherwise lost (often called \emph{garbage}). The second
and \add{the} preferred approach is to \add{change} (\emph{injectivize}) the problem specification 
into an injective function, which 
can be 
\add{directly} implemented in a reversible language without functional change\add{s}.

The string-matching problem has 
an injective specification
\addx{although} it is usually \add{considered} a non-injective problem that, given text $T$ and pattern $P$, computes all valid shifts:
\begin{eqnarray}\label{eq:noninj}
\textit{match}(T,P) &=& \textit{valid-shifts}\,.
\end{eqnarray}
This function specifies that $T$ and $P$ are consumed (deleted) by $\textit{match}$ and \add{are} replaced by $\textit{valid-shifts}$.
Considering the problem from a reversible-computing perspective, we notice that we usually do not delete $T$ and $P$, but preserve them. 
This means that the
string-matching problem has an injective specification:
\begin{eqnarray}\label{eq:inj}
\textit{match}(T,P) &=& (T,P,\textit{valid-shifts})\,.
\end{eqnarray}
Because of the injective specification,
a 
faithful 
reversible implementation of the string-matching problem 
exists.
This specification is an \emph{input-preserving injectivization} ($T$ and $P$ are preserved).

A naive string-matching algorithm compares %
$P$ \add{to} 
$T$ at all 
shifts in $T$ 
from left to right. 
When a mismatch is found at a shift $s$, \add{the} matching continues at the next shift $s+1$. The worst-case matching time \add{for} this (irreversible) naive matching method is $\Theta((n-m+1)m)$.

First, we consider certain standard technical details. We consider characters as integers such that alphabet $\Sigma$ of size $d$ is set $\{0,\ldots,d-1\}$. 
\add{Let} $T$ and $P$ be integer arrays of length $n+1$ and $m+1$, respectively, with 
terminating values $T[n] \neq P[m]$, where $P[m]\not\in \Sigma$. 
\add{Thus,} the end of $p$ is always signaled by a mismatch. 
All valid shifts $s_i$ found during \add{the} search are pushed on a result stack.
\add{Thus,}
\textit{valid-shifts} in Eqs.~(\ref{eq:noninj}, \ref{eq:inj}) is a stack,
\add{which consists} of zero or more unique indices 
of $T$.

\addx{At first glance,} the reversible naive string matcher in Fig.~\ref{fig:naive} looks like a C-like program. \addx{At the second look, we notice that} {the program} uses no destructive assignments, such as \texttt{:=}, only the reversible updates \texttt{+=} and \texttt{-=} that add to resp.\ subtract from a variable the value of an expression,
and 
the conditional at lines~\ref{lst:ns:if} to~\ref{lst:ns:fi} not only has an entry test at \texttt{if}, but also an exit test at~\texttt{fi}, which is the point
where the control flow joins after executing one of \addx{the} two branches. 

Reversible languages \add{comprise} elementary steps that perform injective transformations of the computation state, that is each step performs a forward and backward deterministic transition. 
Because the operations are reversible on the microscopic level, the macroscopic operation of a program written in a reversible language is perfectly reversible. The composition of injective functions is also an injective function, thus reversible programs implement computable injective functions. This principle is the same for all reversible languages including the transition function of a reversible Turing machine~\cite{Bennett:73}, a time-symmetric  machine~\cite{Nakano:22:timesymmetric}, and extensions 
that operate on quantum data (for quantum circuits, {\eg}~\cite{OOCG20}).

\paragraph{A Reversible Matcher}

The reversible naive string matcher\add{, shown} in Fig.~\ref{fig:naive} consists of three  procedures. 
The main procedure \texttt{naivesearch} is called with a text \texttt{T}, a pattern \texttt{P}, and an initially empty stack \texttt{R} as input. When it returns, all valid shifts are stored \add{in} \texttt{R}\addx{,} and \texttt{T} and \texttt{P} are unchanged (all three arguments are pass-by-reference).
The procedure \addx{tries} %
all \add{the} possible shifts 
from left to right
by incrementing \texttt{s} from \texttt{0} to \texttt{n-m} 
and calling procedure \texttt{match} 
in the for-loop \texttt{iterate} in lines~\ref{lst:ns:iterate:begin} to~\ref{lst:ns:iterate:end}. 
As a shorthand, we write \texttt{m} and \texttt{n} for the size of the \add{pattern and }text\add{, respectively}.

Procedure \texttt{match} begins \add{with} calling 
procedure \texttt{compare} to match \texttt{P} \add{to} \texttt{T} \add{beginning} at shift \texttt{s} with the initial index \texttt{i} = \texttt{0}.
If \texttt{compare} returns with \texttt{i} = \texttt{m} (end of \texttt{P} \add{is} reached), the match \add{succeeds,}  and \texttt{s} is a valid shift; otherwise, the match \add{fails} (end of \texttt{P} \add{is} not reached).
The then-branch pushes the valid shift \texttt{s} \add{to} \texttt{R} and resets \texttt{i} to zero. Line~\ref{lst:ns:top} is \add{required} to restore the last value of \texttt{s} from the top of \texttt{R} because the last \texttt{push} \emph{moved} that value to \texttt{R} and thereby zero-cleared \texttt{s}. (This point \add{is} explained below.) %
In the else-branch, after 
the match \add{fails} at \texttt{i < m}, the computation of \texttt{compare} 
is undone 
by uncalling \texttt{compare} to reset \texttt{i} to its initial value \texttt{0}. (This \add{is} also 
explained below.) 
A local scope for \texttt{i} is opened and closed at the begin\add{ning} and end of
\texttt{match} 
\add{using} a \texttt{local}--\texttt{delocal} declaration. \add{Furthermore, this} scope declaration assert\add{s} the initial and final values of local variable \texttt{i} (in both cases \texttt{i} = \texttt{0}).

Procedure \texttt{compare} compares \texttt{P} \add{with} \texttt{T} at \texttt{s}\add{, beginning} with index \texttt{i} = \texttt{0}.
The loop begins with an entry test at line~\ref{lst:ns:from}, which asserts that initially \texttt{i} = \texttt{0}, and ends at line~\ref{lst:ns:until} when 
\texttt{T[s+i]} $\neq$ \texttt{P[i]}.
\add{This} loop always terminates 
because by convention the terminating value \texttt{P[m]} is not in \texttt{T}. 

\begin{figure}
\begin{lstlisting}[texcl=true]
procedure compare(int T[], P[], s, i)
  from i = 0 loop       (*@\label{lst:ns:from}@*)// index assertion
    i += 1              (*@\label{lst:ns:body}@*)// character-by-character comparison
  until T[s+i] != P[i]  (*@\label{lst:ns:until}@*)// loop terminates when a mismatch occurs
    
procedure match(int T[], P[], s, stack R)
  local int i = 0           (*@\label{lst:ns:local}@*)
  call compare(T, P, s, i)  (*@\label{lst:ns:call:cmp}@*)
  if i = m then  (*@\label{lst:ns:if}@*)// match succeeded
    push(s, R)   (*@\label{lst:ns:push}@*)// push s to stack R of valid shifts w/ clearing s
    s += top(R)  (*@\label{lst:ns:top}@*)// restore the value of s
    i -= m       (*@\label{lst:ns:i-=m}@*)// clear i
  else           (*@\label{lst:ns:else}@*)// match failed
    uncall compare(T, P, s, i) (*@\label{lst:ns:uncall}@*)// clear i
  fi s = top(R)  (*@\label{lst:ns:fi}@*)// current shift s is valid
  delocal int i = 0         (*@\label{lst:ns:delocal}@*)

procedure naivesearch(int T[], P[], stack R)
  iterate int s = 0 to n-m  (*@\label{lst:ns:iterate:begin}@*)// slide over text
    call match(T, P, s, R)  (*@\label{lst:ns:call:match}@*)// match at current shift s
  end                       (*@\label{lst:ns:iterate:end}@*)
\end{lstlisting}
\caption{\add{Reversible} naive string-matching algorithm.}
\label{fig:naive}
\end{figure}

\paragraph{Reversible Programming}

Similar to \add{the} other language paradigms, 
reversible computing has its own programming \addx{methodology}. We summarize the programming techniques relevant %
{to} the programs in this \add{study} and exemplify them with examples from the programs. \add{This is related} to three important reversible programming
themes: control flow, reversible updates, and data structures.

\begin{description}

\item{Control flow:}
Join points in the control flow of a program require assertions to make them \emph{backward deterministic}. In reversible languages, each {join} point is associated with a predicate that provides an assertion \add{regarding} the incoming \add{computational} states.
This \add{suggests that} we \add{must identify} a predicate that is true when coming from a then-branch and false when coming from an else-branch. For a loop\add{,} we \add{must identify} a predicate that is initially true and false after each 
{iteration}.
These assertions \add{regarding} the incoming control flow (`\emph{come from}') are 
evaluated at runtime\add{, similar to} the tests that dispatch the outgoing control flow (`\emph{go to}'). If a predicate does not have the expected truth value, the control-flow operator is undefined, and therefore the entire program. 

Examples are \texttt{fi}-predicates in \emph{reversible conditional} (\texttt{if-fi}) and \texttt{from}-predicates in a \emph{reversible while-loop} (\texttt{from-until}).  
Sometimes\add{,} these assertions are easy to find, such as the entry test 
in line~\ref{lst:ns:from} of the increment loop in Fig.~\ref{fig:naive}, which checks that the loop \add{begins} from \mbox{\texttt{i} $=$ \texttt{0}} and \mbox{\texttt{i} $\neq$ \texttt{0}} after the first iteration.
The exit test \texttt{s = top(R)} in line~\ref{lst:ns:fi} of the conditional uses the fact that the shifts \add{in} stack \texttt{R} are unique. \add{Thus}, whenever a match \add{fails, indicating that} \texttt{s} is not pushed to \texttt{R}, 
the current 
shift \texttt{s} and the last shift \texttt{top(R)} 
differ. An excerpt \add{from} the program highlights these two cases:
\begin{alltt}
  \myem{from i = 0} loop          if i = m then push(s,R)\textrm{ ...}
    i += 1                   else \textrm{... no push ...}
  until T[s+i] != P[i]     \myem{fi s = top(R)}
\end{alltt}
However, these
assertions are not always easy to find and may require a restructuring of the program. 
Only a few conventional control-flow operators are reversible and do not require additional assertions, such as for-loops that iterate for a fixed number of times, {\eg}, \texttt{iterate} in lines~\ref{lst:ns:iterate:begin}--\ref{lst:ns:iterate:end}. 

\item{Reversible updates}: Data can only be \add{reversibly }updated.
The usual computational resource\add{s} \add{for}
deleting 
data 
in one way or another 
\add{are} not available
({\eg}, forgetting local variables upon procedure return). 
We \add{present} several update techniques 
used in our programs starting from a straightforward initialization of a zero-cleared variable to the uncalling of a procedure to reset values.
Readers interested in reversible updates defined in a more general form should refer to \cite{YoAG08a:CF}.

(i) 
Copying \& zero-clearing. 
If variable \texttt{i} is known to be zero, it can be set to a value, {\eg}, by addition. For example, \mbox{\texttt{i += m}} has the effect of reversibly copying the value of \texttt{m} to \texttt{i}. Similarly, if we know that \texttt{i} has the same value as \texttt{m}, that is \mbox{\texttt{i} $=$ \texttt{m}}, we can zero-clear \texttt{i} \add{using} \mbox{\texttt{i -= m}}. 
\add{However, the} relation\add{ship} \add{between} two variable values is not always known.
\add{Additionally}, when it becomes known \add{owing} to an equality test in a conditional, we can exploit this knowledge in the then-branch to zero-clear the variable. This is used in line~\ref{lst:ns:i-=m} to reversibly reset \texttt{i} to zero:
\begin{alltt}
  if \myem{i = m} then \textrm{...} 
    \myem{i -= m} 
  else \textrm{...}  
  fi \textrm{...}  
\end{alltt}
These techniques are \add{indirectly} used in a local--delocal declaration\add{,} where the local variable is initialized and cleared at the begin\add{ning} and end 
\addx{of its scope} 
\add{using} an equality test (here, however, \addx{just} a simple \mbox{\texttt{x = 0}} in lines~\ref{lst:ns:local} and~\ref{lst:ns:delocal}).
\addx{In general}, the declaration of a local variable\add{,} \texttt{i}\add{,} has the following form, where \texttt{i} is initially set to the value of $e$ and in the end must have a value equal to the value of~$e'$:
\begin{alltt}
  local int \myem{i = \ensuremath{e}} \textrm{...} delocal int \myem{i = \ensuremath{e'}}
\end{alltt}

(ii) Compute-uncompute.
Reversible programs are forward and backward deterministic\add{; thus,} they can run 
\addx{efficiently} 
in both directions.
Many reversible languages not only provide 
access to their standard 
semantics, {\eg}, \add{using} a procedure call, but also to 
their inverse (backward) semantics, {\eg}, \add{using} a procedure uncall. An uncall of a procedure is as efficient as a call because a procedure is forward and backward deterministic.
We \add{employ}
this property
to reset 
index \texttt{i} after an unsuccessful match, which can occur at any
position \mbox{\texttt{i} $<$ \texttt{m}} in a pattern.
We
cannot determine the subtrahend
to zero-clear \texttt{i} (and we cannot use the irreversible \mbox{\texttt{i -= i}}). Instead we undo the computation of \texttt{i} by an uncall in the else-branch. This resets \texttt{i} to its initial value \texttt{0}.
\add{By combining} the techniques  
seen so far, 
we ensure that \texttt{i} is zero-cleared after the \texttt{if-fi}. 
In the then-branch\add{,} we \add{use} 
the equality \texttt{i = m}; 
in the else-branch we undo the computation: 
\begin{alltt}
  local int i = 0
    \myem{call compare(\textrm{...},i)}
    if i = m  then \textrm{...} 
      i -= m
    else \myem{uncall compare(\textrm{...},i)}
    fi \textrm{...}
  delocal int i = 0
\end{alltt}
The compute-uncompute method 
goes back to the first 
RTMs~\cite{Bennett:73}, where a machine is textually composed with its inverse machine to restore the original computation state (which doubles the size of the entire machine). The call--uncall method above shares the text of a procedure (here, \texttt{compare}). It just invokes \add{the} standard resp.\ inverse computation \add{of the procedure}.
We could have used an uncall in the then-branch.
\add{Instead,} we exploit
the knowledge about \texttt{i = m} from the entry test of the \texttt{if-fi}
to \addz{zero-clear} 
\texttt{i} \add{using} \texttt{i -= m}, which takes constant time, whereas the uncall \add{requires} time proportional to the length of~\texttt{P}. \addz{The conditional takes advantage of both techniques.}

Bennett used \emph{program inversion} to obtain \add{an} inverse \add{RTM}, whereas the \add{aforementioned} method %
uses \emph{inverse computation}. We could have used program inversion to invert the procedure \texttt{compare} into the inverse procedure \texttt{compare}$^{-1}$ and invoked the latter \add{using} \texttt{call compare}$^{-1}$ to reset \texttt{i}.
\addz{Both methods, calling the inverse procedure \texttt{compare}$^{-1}$ and inverse computation of \texttt{compare},}
\begin{alltt}
  call compare\(\sp{-1}\)(\textrm{...},i)  \textrm{and}  uncall compare(\textrm{...},i)\textrm{,}
\end{alltt}
\addz{are functionally equivalent.}
Because RTMs cannot access their inverse semantics, \addz{{\eg}, by an uncall}, program inversion is the only choice to build \addz{RTMs} that restore the input from their output,
whereas in a reversible language typically both choices are available. We refer to 
them 
collectively as the \emph{compute--uncompute} 
programming method. It is 
used in many forms \add{at} all levels of a reversible computing system from reversible circuits{~(\eg, \cite{ThAxGl:11:RC})} 
to high-level languages{~(\eg,~\cite{Mogensen:22})}.

\item{Data structures:} The data structures in reversible languages are the same as in conventional languages, such as arrays, stacks and lists, only the update operations on the data structures must be reversible. In the case of a stack, the operations push and pop can be defined \add{as} inverse to each other by letting them swap in and out the value on top of the stack, which means that $\mathit{pop} = \mathit{push}^{-1}$~\cite{YokoyamaGlueck:07:Janus}: 
\begin{equation}
  (v, v_n\,...\,v_1)
  \begin{array}{c}
    \stackrel{\mathit{push}}{\longrightarrow} \\[-1ex]
    \stackrel[\mathit{pop}]{}{\longleftarrow}
  \end{array}
  (0, v\;v_n\,...\,v_1)
\end{equation}
This definition of a push has the unfamiliar property that \texttt{push(s,R)} \add{in} line~\ref{lst:ns:push} moves the value $v$ of \texttt{s} to the top of the stack \texttt{R} and zero-clears \texttt{s}. 
Because we need the value that we \add{have} pushed to continue the search,
the value is copied back to 
\texttt{s} from the top of \texttt{R} \add{using} \mbox{\texttt{s += top(R)}} in line~\ref{lst:ns:top}.

We can now complete 
the body of procedure \texttt{match} in lines~\ref{lst:ns:local}--\ref{lst:ns:delocal} 
by adding the two statements to the then-branch and the exit test 
that we discussed above
to \texttt{fi}:
\begin{alltt}
  local int i = 0
    call compare(\textrm{...},i)
    if i = m then
      \myem{push(s,R)}
      \myem{s += top(R)}
      i -= m
    else uncall compare(\textrm{...},i)
    fi s = top(R)
  delocal int i = 0
\end{alltt}
We remark
that abstract data types and object-oriented features can be used in reversible languages provided \add{that} their update operations and methods are reversible. 
Ideally, they are designed such that call and uncall \add{can be used.} \addx{When operators are inverse to each other, only one of them needs to be implemented.}
The idea of \emph{code sharing} 
by running code backward
can be traced back to the 60s~\cite{ReillyFederighi:65}.

\end{description}
We have presented all 
reversible-programming techniques used in 
the procedures \texttt{naivesearch}, \texttt{match}, and \texttt{compare}. This completes the review of the reversible naive string-matching algorithm \add{shown} in Fig.~\ref{fig:naive}.

\section{{A} Reversible Rabin--Karp Algorithm}
\label{sec:rabinkarp}

The Rabin--Karp 
algorithm~\cite{KaRa87} replaces exact matches \add{with} approximate matches, which are inexact but \addx{fast}, and performs exact matches only if a successful match is possible. This makes the algorithm conceptually easy and fast in practice.
{This study 
\addx{refers} to 
the version presented by Cormen et al.~\cite{CLRS09:intro-alg}.}

The algorithm extends the naive string-matching algorithm by performing at each shift $s$\add{,} an approximate match by comparing the \emph{hash values} of $P$ and $T$ at $s$. An exact match 
(by procedure \texttt{match}) 
\addy{can only succeed}
if the two hash values are identical; otherwise, an exact match is impossible. In either case, the next hash value at $s+1$ can be computed in constant time from the current hash value at $s$ (rolling hash) \add{using} a \emph{hash\add{-}update function} $\phi_s$. 
Hash values typically fit into single words that can be compared 
in constant time. The initial hash values of $P$ and 
$T$ at shift 0 are computed at the \add{beginning} of the algorithm \add{using} a \emph{hash function} (pre-processing). The \add{subsequent} hash values are then computed \add{using} the hash\add{-}update function.

The key to an efficient clean reversible Rabin--Karp matcher \add{is a} reversible constant-time 
\addx{calculation} of
\addx{the} rolling hash values.
We have explained the reversible naive string\add{-}matching algorithm in the previous section\add{,} and we can reuse 
procedures \texttt{match} and \texttt{compare} from Fig.~\ref{fig:naive} \add{for} the reversible Rabin--Karp algorithm. In this section, we focus on the injectivization and implementation of the hash functions \add{that}
show the considerations \add{for} the development of a more advanced reversible algorithm.

A preliminary version of the reversible Rabin--Karp program {has} appeared {as a} technical report~\cite{TaHY22:eng}. 
\add{The reversible Rabin--Karp program in this study becomes more concise and modular because of the use of macros and iterate loops.}

\paragraph{Hash Function}
The Rabin--Karp algorithm requires a \addx{pre-process that calculates}\del{ as 
a pre-process
the calculation of}
the hash values 
of the given pattern\add{,} $P[0..m-1]$\add{,} and of the initial substring\add{,} $T[0..m-1]$\add{,} of the given text $T$. The \addx{initial} substring has the same length $m$ as $P$. 
Recall that $P$ and $T$ are two integer arrays over the \addx{non-empty} alphabet of $d$ integers \mbox{$\{0,\ldots,d-1\}$}. 

Let $p$ denote the hash value of 
$P$
obtained \add{using a} polynomial hash function with modulus~$q$:
\begin{eqnarray}\label{eq:p}
p &=& (P[0]d^{m-1}+P[1]d^{m-2}+ \cdots + P[m-1]) \bmod q\,. 
\end{eqnarray}
Similarly, 
let $t_s$ denote the hash value of the substring $T[s..s+m-1]$ 
of length $m$ 
at shift $s$:
\begin{eqnarray}\label{eq:t}
t_s &=& (T[s]d^{m-1}+T[s+1]d^{m-2}+ \cdots + T[s+m-1]) \bmod q\,.
\end{eqnarray}
The polynomials can be computed 
in $\Theta(m)$ using Horner's rule.
Preferably, %
modulus $q$ \add{should be} as large a prime as possible such that $dq$ fits into a single word: \add{thus,} 
all \addx{modulo} operations are single-precision arithmetic. 

\add{The following properties of the two hash values are important.}
If $t_s\neq p$, 
$T[s..s+m-1] \neq P[0..m-1]$: 
thus, shift $s$  
{cannot be} valid. If $t_s = p$, 
it is \emph{possible} that $T[s..s+m-1] = P[0..m-1]$: 
thus, an exact 
match is \add{required} to determine \add{if} \addx{shift} $s$ is valid.

\addx{Whereas} \addx{hash} value $p$ of $P$
\add{is} the same during the matching, %
hash value $t_s$ of $T$
\add{must} be calculated for each shift $s$. \addx{In order to efficiently calculate hash values $t_s$ for $s>0$,} we 
calculate the hash value $t_{s+1}$ at the \add{subsequent} shift $s+1$ from the hash value $t_s$ at the current shift $s$\add{, using}
a recurrence function.
We use this 
function to reversibly update the hash values \addx{in constant time}. 
\addy{For a reversible implementation, conditions are first determined under which the function is injective. Then the function is rewritten into a composition of modular arithmetic operators, each of which is embedded in a reversible update.}

\paragraph{Hash\add{-}Update Function}
We can compute $t_{s+1}$ from $t_s$ 
for $0\leq s \leq n-m$ because
\begin{eqnarray}\label{eq:update-t}
t_{s+1} &=& \phi_s(t_s)
\end{eqnarray}
where the recurrence function {$\phi_s$} is defined \add{as}
\begin{eqnarray}\label{eq:phi}
\phi_s(x) &=& (d(x-T[s]d^{m-1})+T[s+m]) \bmod q\,.
\end{eqnarray}
The recurrence function 
calculates the \add{subsequent} hash value from the current hash value $x$ by canceling the old highest-order radix-$d$ digit $T[s]$ \add{via} subtraction, shifting the value \add{via} multiplication with $d$, adding the new lowest-order radix-$d$ digit $T[s+m]$\addx{, and obtaining its remainder when divided by $q$}. \addx{We can compute 
$\phi_s(t_s)$ in constant time 
if factor $d^{m-1}$ is precomputed.}
\addx{For reversible computing it is
problematic that}
$\phi_s$ is \add{generally} \emph{not injective} because the modulo operation is not injective \addx{in its first argument} for arbitrary integers.
\add{Determining the} conditions under which a function becomes injective by exploiting its properties and 
specific application context is an important step in the development of a clean reversible algorithm. 

We 
exploit certain properties in the domain of arithmetic operations in $\phi_s$. \addx{The congruence of two integers $x$ and $y$ modulo $q$, $x \equiv y \pmod q$, is compatible with addition, subtraction, and multiplication. But unlike these operations, division cannot always be performed. To show the injectiveness of $\phi_s$ in the following lemma requires that $d$ and $q$ 
be \emph{coprime} \addx{integers}, which means their greatest common divisor is \addx{1}.
For example, when $d$ and $q$ are not coprime, {\eg}, $d=2$ and $q=6$ then
$2\cdot 1 \equiv 2 \cdot 4 \pmod 6$, 
we cannot divide this congruence by 2 because $1 \not\equiv 4 \pmod 6$. \addx{Whereas} if they are coprime, {\eg},
$d=2$ and $q=3$, then
$2\cdot 1 \equiv 2 \cdot 4 \pmod 3$ and we can divide this by 2 to deduce $1 \equiv 4 \pmod 3$.}

\addx{We employ this constraint on the operands 
in Lemma~\ref{lem:injphi},
\add{which} \addx{establishes} that $\phi_s$ is injective. For the cancelation of common terms in congruences modulo $q$, we use the following two lemmas in the proof.}

\begin{lemma}[\eg{}~{\cite[Prop.\,13.3]{Lieb15}}] %
\label{lem:addition}
Suppose $x, y \in \mathbb{Z}$ and $x \equiv y \pmod q$. If $c \in \mathbb{Z}$ then
\begin{eqnarray}
x + c \equiv y + c \pmod q &\Longrightarrow& x \equiv y \pmod q \,.
\end{eqnarray}
\end{lemma}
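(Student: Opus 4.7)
The statement is the standard additive-cancellation property for congruences, so my plan is to simply unfold the definition of $\equiv$ and observe that the auxiliary term $c$ drops out by ordinary integer arithmetic. Concretely, by definition of congruence modulo $q$, the hypothesis $x+c \equiv y+c \pmod q$ asserts that $q$ divides $(x+c) - (y+c)$. The single algebraic step that drives the proof is the identity $(x+c) - (y+c) = x - y$, which holds for all integers $x,y,c$. Substituting, we obtain $q \mid x-y$, which is exactly the conclusion $x \equiv y \pmod q$.

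Since this is essentially a one-line computation, the plan has no substantive obstacle. The only thing worth flagging is the contrast with multiplication: additive cancellation across a congruence is unconditional because $(\mathbb{Z},+)$ is a group, whereas the multiplicative analogue requires $\gcd(c,q)=1$. It is precisely this asymmetry that motivates the coprimality hypothesis the paper imposes on $d$ and $q$ in Lemma~\ref{lem:injphi}; the present lemma isolates the easy, unconditional half of the cancellation toolkit that will be used there. I would therefore present the proof in two short sentences: unfold the definition, cancel $c$ in the integer difference.
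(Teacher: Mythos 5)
Your proof is correct. The paper itself gives no proof of Lemma~\ref{lem:addition} --- it simply cites Liebeck's textbook --- and your one-line argument (unfold the definition of congruence and use $(x+c)-(y+c)=x-y$ to get $q \mid x-y$) is exactly the standard argument being delegated to that reference; your side remark contrasting unconditional additive cancellation with the coprimality needed for multiplicative cancellation also matches the paper's own discussion surrounding Lemmas~\ref{lem:division} and~\ref{lem:injphi}.
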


\begin{lemma}[\eg{}~{\cite[Prop.\,13.5]{Lieb15}}] %
\label{lem:division}
Suppose $d$ and $q$ are coprime integers. If $x, y \in \mathbb{Z}$ then
\begin{eqnarray}
dx \equiv dy \pmod q &\Longrightarrow& x \equiv y \pmod q\,.
\end{eqnarray}
\end{lemma}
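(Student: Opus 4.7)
The plan is to reduce the claim to the standard fact that $q$ divides a product only if it divides one of the factors, provided $q$ is coprime to the other. I would begin by translating the congruence $dx \equiv dy \pmod q$ into its divisibility form: there exists $k \in \mathbb{Z}$ such that $dx - dy = kq$, equivalently $q \mid d(x-y)$. The goal then reduces to showing $q \mid (x-y)$, from which $x \equiv y \pmod q$ is immediate.

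For the reduction step I would invoke Bezout's identity: because $d$ and $q$ are coprime, there exist integers $u, v$ with $ud + vq = 1$. Multiplying $d(x-y) = kq$ through by $u$ gives $ud(x-y) = ukq$, and substituting $ud = 1 - vq$ yields $(x-y) - vq(x-y) = ukq$, so $x - y = (uk + v(x-y))q$. Hence $q \mid (x-y)$, which is exactly $x \equiv y \pmod q$. An alternative and slightly slicker route is to cite Euclid's lemma directly: from $q \mid d(x-y)$ and $\gcd(d,q) = 1$, one concludes $q \mid (x-y)$. I would prefer the Bezout route because it is self-contained and makes visible why coprimality is essential, mirroring the informal example ($d=2$, $q=6$ vs.\ $d=2$, $q=3$) given just before the lemma statement.

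There is no real obstacle here; the only choice to make is how much to rely on cited number-theoretic machinery versus spelling the argument out. Since the companion Lemma~\ref{lem:addition} is stated with a textbook citation and used without proof, it would be consistent to handle Lemma~\ref{lem:division} the same way, presenting a short two- or three-line proof that either quotes Euclid's lemma or unfolds Bezout's identity. In the context of the paper the lemma is merely a tool for the subsequent injectivity argument for $\phi_s$, so brevity is preferable to a fully detailed derivation.
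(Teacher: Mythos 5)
Your proposal is correct and follows essentially the same route as the paper: the paper's proof likewise rewrites the hypothesis as $q \mid d(x-y)$ and then concludes $q \mid (x-y)$ directly from coprimality, i.e.\ precisely the ``slicker'' Euclid's-lemma variant you mention, without unfolding Bezout's identity. Your optional Bezout expansion is a valid, more self-contained elaboration of that single step, but it is not a different argument.
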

\begin{proof}
\add{We} assume that $dx \equiv dy \pmod q$. \add{Therefore,} $d(x-y)$ is a multiple of $q$.
Because $d$ and $q$ are coprime, $(x-y)$ is a multiple of $q$, \ie{} $x \equiv y \pmod q$.
\end{proof}%

\begin{lemma}
\label{lem:injphi}
Provided that $0\leq x < q$, $0< d < q$, and $d$ and $q$ are coprime, %
recurrence function $\phi_s$ is injective, where
\begin{eqnarray}\label{eq:phis}
\phi_s(x) &=& (d(x-T[s]d^{m-1})+T[s+m]) \bmod q\,.
\end{eqnarray}
\end{lemma}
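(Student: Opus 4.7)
The plan is to show injectivity directly: assume $\phi_s(x_1) = \phi_s(x_2)$ with $x_1, x_2$ both in the domain $[0, q)$, and deduce $x_1 = x_2$. Since $\phi_s(x_i)$ is defined as a value reduced modulo $q$, the hypothesis is equivalent to the congruence
\begin{equation*}
d(x_1 - T[s]d^{m-1}) + T[s+m] \equiv d(x_2 - T[s]d^{m-1}) + T[s+m] \pmod q.
\end{equation*}

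From there I would peel off the operations of $\phi_s$ in reverse, using the two preceding lemmas. First, apply Lemma~\ref{lem:addition} with $c = T[s+m]$ to cancel the additive constant, obtaining $d(x_1 - T[s]d^{m-1}) \equiv d(x_2 - T[s]d^{m-1}) \pmod q$. Next, invoke Lemma~\ref{lem:division}, which is applicable exactly because the hypothesis guarantees that $d$ and $q$ are coprime, to divide through by the factor $d$: this yields $x_1 - T[s]d^{m-1} \equiv x_2 - T[s]d^{m-1} \pmod q$. A second application of Lemma~\ref{lem:addition}, now with $c = -T[s]d^{m-1}$, eliminates the remaining shared term and gives $x_1 \equiv x_2 \pmod q$.

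Finally, I would use the domain constraint $0 \leq x_1, x_2 < q$ to promote this congruence to an actual equality: two integers in an interval of length $q$ that are congruent modulo $q$ must coincide. Hence $x_1 = x_2$, establishing injectivity of $\phi_s$ on the stated domain.

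There is no real obstacle here; the proof is essentially a mechanical unwinding of the definition through the two cancellation lemmas. The only point requiring slight care is recognizing that the coprimality condition is exactly what makes Lemma~\ref{lem:division} applicable (hence why the hypotheses include it), and that the bound $0 \leq x < q$ is needed to turn the final congruence into equality. The side conditions $0 < d < q$ and $d$ coprime to $q$ are both consumed in the divide-by-$d$ step, while the bound on $x$ is consumed only in the last step.
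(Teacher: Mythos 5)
Your proof is correct and follows essentially the same route as the paper's: cancel the additive term $T[s+m]$ via Lemma~\ref{lem:addition}, divide by $d$ via Lemma~\ref{lem:division} using coprimality, cancel $-T[s]d^{m-1}$ via Lemma~\ref{lem:addition} again, and conclude equality from $0 \leq x_1, x_2 < q$. No differences worth noting.
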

\begin{proof}
We show that for any $x_1$ and $x_2$, whenever $\phi_s(x_1)=\phi_s(x_2)$, we have $x_1=x_2$.
\begin{align*}
 &~ \phi_s(x_1) ~~=~~ \phi_s(x_2)\\
\Longrightarrow~~
 &~ d(x_1-T[s]d^{m-1})+T[s+m] ~~\equiv~~ d(x_2-T[s]d^{m-1})+T[s+m] \pmod q\\
\Longrightarrow~~
 &~ d(x_1-T[s]d^{m-1}) ~~\equiv~~ d(x_2-T[s]d^{m-1}) \pmod q \qquad \because \text{Lemma \ref{lem:addition}}\\
\Longrightarrow~~
 &~ x_1-T[s]d^{m-1} ~~\equiv~~ x_2-T[s]d^{m-1} \pmod q \qquad\because \text{Lemma \ref{lem:division} and \addx{$d$ and $q$ are coprime integers}} %
\\
\Longrightarrow~~
 &~ x_1 ~~\equiv~~ x_2 \pmod q \qquad \because \text{Lemma \ref{lem:addition}}\\
\Longrightarrow~~
 &~ x_1 ~~=~~ x_2 \qquad\because 0\leq x_1,x_2 < q
\end{align*}
\end{proof}

The condition that $d$ and $q$ are coprime is not a restriction in practice because $q$ is usually \add{selected} to be a large prime number. Thus, if $q$ is prime, any alphabet size $0<d<q$ can be used.

\paragraph{Injective Modular Arithmetic}%

For 
efficient calculation of a hash value\add{,} it is preferable to perform modular arithmetic at each elementary arithmetic operation instead of first calculating a large value that may exceed the largest representable integer.
\add{Therefore}, 
we 
define the injective recurrence function~$\phi_s$ using elementary \addx{modular arithmetic operators}, each of which is \add{sufficiently} simple 
to be easily implemented in a reversible language (\eg, supported as built-in operators or by reversible hardware).
We rewrite $\phi_s(x)$ in Eq.~(\ref{eq:phis}) into a composition of \emph{modular arithmetic operators} $+_q$, $-_q$, and $\cdot_q$ \addx{using the distributivity of the modulo operation} as shown in Eq.~(\ref{eq:phismod}). \add{Similarly}, the hash functions in \add{Eqs.~(\ref{eq:p}) and (\ref{eq:t})} \add{can be} rewritten \add{as} \add{Eqs.~(\ref{eq:hashp}) \add{and} (\ref{eq:hasht})}. The same transformation is \add{required} for 
implementation in 
conventional \add{languages}. 
\add{Additionally}, we 
inspect the injectivity of each operator \add{such} that a reversible implementation can be provided.

The injective function\add{,} $\phi_s$\add{,}
\add{comprises the following} 
operators: 
\begin{eqnarray}
\phi_s(x) &=& d\cdot_q(x-_qT[s]\cdot_qh)+_qT[s+m]
\label{eq:phismod}
\end{eqnarray}
where 
factor $h=d^{m-1} \pmod q$ \add{is}
precomputed by
\begin{eqnarray}
h &=& \underbrace{d\cdot_q \cdots \cdot_q d}_{m-1}\,.
\label{eq:h}
\end{eqnarray}

Hash value $p$ of a given pattern \addx{$P$} and 
initial hash value $t_0$ of a given text \addx{$T$} can be obtained 
using Horner's rule {defined} in $\psi$:
\begin{eqnarray} %
p &=& \psi(P[0..m-1])
\label{eq:hashp} \\
t_0 &=& \psi(T[0..m-1])
\label{eq:hasht}
\end{eqnarray}
where 
\begin{eqnarray}
\psi(X[i..j]) &=& X[j] +_q d\cdot_q(X[j-1] +_q d\cdot_q(\cdots +_q d\cdot_q(X[i+1] +_q d\cdot_q X[i])\cdots))\,.
\label{eq:psi}
\end{eqnarray}

The identity between $\psi$ applied 
to the final substring\add{,} $T[n-m..n-1]$\add{,} at shift $n-m$
and the 
hash update $\phi_{n-m-1}$ \add{is} used to zero-clear the final hash value\add{,}
$t_{n-m}$\add{,} in the reversible program:
\begin{eqnarray} 
t_{n-m} &=& \psi(T[n-m..n-1]) \qquad\because\mbox{Eq.~(\ref{eq:psi})}\\
&=& \phi_{n-m-1}(t_{n-m-1})  \qquad\because\mbox{Eq.~(\ref{eq:update-t})} \,.
\label{eq:hashtfinal}
\end{eqnarray}%

The problem is that 
{modular arithmetic operations}
are \add{generally non-injective}.
However, under certain conditions, they are injective in \addx{one} of their arguments. \addx{For example, they are injective in their \addy{first} arguments:}%
\footnote{{A partial function\add{,} 
$f\!: X\times Y\times \cdots \times Y \rightharpoonup Z$\add{,} 
is \emph{injective in its first argument} \addx{iff} $\forall x_1, x_2 \in X$, $\forall y_1,\ldots,y_n \in Y$: if $f(x_1,y_1,\ldots,y_n)$ and $f(x_2,y_1,\ldots,y_n)$ are defined\add{,} $f(x_1,y_1,\ldots,y_n)=f(x_2,y_1,\ldots,y_n) \implies x_1 = x_2$.
\add{Similarly}, for other arguments\addx{~\cite{YoAG08a:CF}}.}}
\begin{itemize}
  \item Addition
  $x+_qy$ and subtraction 
$ x-_qy$
are injective 
in their \addy{first} arguments 
if 
$0\leq x, y < q$.
  \item Multiplication 
$x\cdot_qd$ is injective
in its \addy{first} argument 
if $0\leq x < q$, $1\leq d < q$, and $d$ and $q$ are coprime. %
\end{itemize}
Note that $x\cdot_qd$ is not injective in its \addy{first} argument\add{,} unless $d$ and $q$ are coprime.
Thus, \add{the relationship between} $d$ and $q$ is a necessary condition. 
Analogously, \addy{the same holds for the second} arguments of $+_q$, $-_q$, and $\cdot_q$.
\addx{Recall that the} composition of injective functions is an injective function. Thus, the injectiveness of
operators $x+_qy$, $x-_qy$, and $x\cdot_qy$ under the stated conditions \add{demonstrates} the injectiveness of 
\addy{Eq.~(\ref{eq:phismod})} under corresponding conditions.

\paragraph{Implementation of Modular Arithmetic}
A ternary
function that is injective in its first argument $f(x,y,z)$ can be embedded in \addx{the} \emph{reversible update} $g(x,y,z)=(f(x,y,z),y,z)$.
Arguments $y$ and $z$ are \addx{part of} the result of $g$. \add{Thus,} $g$ is injective.
Such reversible updates for 
$x+_qy$ and $x\cdot_qy$
can be implemented 
in Janus.
We
write 
\begin{itemize}
\item \texttt{x +=\(\sb{\texttt{q}}\) y} ~for~ $g_1(x,y,q)=(x+_qy,y,q)$, ~and
\item \texttt{x *=\(\sb{\texttt{q}}\) y} ~for~ $g_2(x,y,q)=(x\cdot_qy,y,q)$.
\end{itemize}

In the implementation of these operators in a reversible language, \texttt{x}, \texttt{y}, and \texttt{q} are integers that \addy{cannot be larger than the} largest representable integer in that language. 
Otherwise, the same restrictions as \add{those} for 
mathematical modular arithmetic apply.
\add{Thus}, we assume that %
\texttt{x} and \texttt{y} range over 0 to \mbox{\texttt{q}$-1$}, except that \texttt{y} ranges from 1 in multiplication.
\addx{It is the \addy{programmer's responsibility} to ensure
\texttt{y} and \texttt{q} are coprime integers in \texttt{x *=\(\sb{\texttt{q}}\) y}.}
\addx{In practice, it is sufficient that \addy{\texttt{q} is prime}, so that \texttt{y} and \texttt{q} are coprime.}
The subtraction\add{,} 
$x-_qy$\add{,} can be realized \add{using} an uncall to \mbox{\texttt{x +=\(\sb{\texttt{q}}\) y}}\add{,} and is \addx{written} as \texttt{x -=\(\sb{\texttt{q}}\) y}.
\addx{The variable on the left-hand side must not occur on the right-hand side of any of these operators.}
We assume \add{that these} operators \add{perform} in constant time.

The \texttt{n}th power of \texttt{b} can be stored in a zero-cleared variable\add{,} \texttt{z}, written \mbox{\texttt{z +=\(\sb{\texttt{q}}\) b\(\sp{\texttt{n}}\)}}, by initializing \texttt{z} with~\texttt{1} \addx{and} repeating \addx{\texttt{n}}-times
\mbox{$\texttt{z *=\(\sb{\texttt{q}}\) b}$}:
\begin{alltt}
  z += 1
  iterate int i = 0 to n-1
    z *=\(\sb{\texttt{q}}\) b
  end
\end{alltt}
For 
notational simplicity, \addx{we write} \texttt{z -=\(\sb{\texttt{q}}\) b\(\sp{\texttt{n}}\)}
as the inverse of
\texttt{z +=\(\sb{\texttt{q}}\) b\(\sp{\texttt{n}}\)}
to zero-clear \texttt{z}. In an 
implementation
the arguments of
modular arithmetic operators cannot be larger than the largest representable integer in \add{a} particular reversible programming language.

\paragraph{A Reversible Rabin--Karp Matcher}

\add{Figure~\ref{fig:rabin-karp} shows the program for the reversible Rabin--Karp algorithm.} The program consists of three procedures and \add{uses} procedure \texttt{match} 
in Fig.~\ref{fig:naive}.
\begin{figure}[tttt]
\begin{lstlisting}
procedure init_h(int x, i, j, X[], d, q)
  iterate int k = i to j-1 // compute hash value by Horner's rule
    x *=(*@\(\sb{\texttt{q}}\)@*) d     // shift by one radix-d digit
    x +=(*@\(\sb{\texttt{q}}\)@*) X[k]  // add the low-order radix-d digit
  end

procedure update(int t, T[], s, h, m, q)
  t -=(*@\(\sb{\texttt{q}}\)@*) T[s]*h  // remove the high-order radix-d digit(*@\label{fig:rabin-karp:update:remove}@*)
  t *=(*@\(\sb{\texttt{q}}\)@*) d       // shift by one radix-d digit(*@\label{fig:rabin-karp:update:shift}@*)
  t +=(*@\(\sb{\texttt{q}}\)@*) T[s+m]  // add the low-order radix-d digit(*@\label{fig:rabin-karp:update:add}@*)

procedure rabinkarp(int T[], P[], d, q, stack R)
  local int t=0, int p=0, int h=0
  call init_h(p, 0, m, P, d, q)(*@\label{lst:rk:call:hp}@*) // store hash of P[0..m-1] in p
  call init_h(t, 0, m, T, d, q)(*@\label{lst:rk:call:ht}@*) // store hash of T[0..m-1] in t
  h +=(*@\(\sb{\texttt{q}}\)@*) d(*@\(\sp{\texttt{m-1}}\,\)@*)                     // precompute factor h(*@\label{lst:rk:call:pow}@*)

  iterate int s = 0 to n-m      // slide over text (*@\label{lst:rk:iterate:begin}@*)
    if p = t then               // compare hash values(*@\label{lst:rk:p=t}@*)
      call match(T, P, s, R)    // match at current shift s(*@\label{lst:rk:call:match}@*)
    fi p = t
    call update(t, T, s, h, m, q)(*@\label{lst:rk:call:update}@*)    // update hash value
  end(*@\label{lst:rk:iterate:end}@*)
  
  h -=(*@\(\sb{\texttt{q}}\)@*) d(*@\(\sp{\texttt{m-1}}\;\)@*)                          // clear h(*@\label{lst:rk:uncall:pow}@*)
  uncall init_h(t, n-m, n, T, d, q)  // clear t(*@\label{lst:rk:uncall:ht}@*)
  uncall init_h(p, 0,   m, P, d, q)  // clear p(*@\label{lst:rk:uncall:hp}@*)
  delocal int t=0, int p=0, int h=0
\end{lstlisting}
    \caption{Reversible Rabin--Karp algorithm.}
    \label{fig:rabin-karp}
\end{figure}

The main procedure \texttt{rabinkarp} is called with text \texttt{T}, pattern \texttt{P}, alphabet size \texttt{d} (including \addx{the} dummy character terminating \texttt{P}),
modulus \texttt{q}, and an initially empty stack \texttt{R} as \add{the} input. When it returns, all valid shifts are stored in \texttt{R}, and 
all other 
variables \texttt{T}, \texttt{P}, \texttt{d}, and \texttt{q} \add{remain} \addx{unchanged}.
\addx{Therefore, it is an implementation of an input-preserving injectivization of the string-matching problem shown in Eq.~\ref{eq:inj}.}

The main iteration in lines \ref{lst:rk:iterate:begin}--\ref{lst:rk:iterate:end} corresponds to \add{that} in the main procedure 
of the naive string-matching algorithm\add{,} except that the hash value \texttt{p} of \texttt{P} and the hash value \texttt{t} of \texttt{T} at shift \texttt{s} are compared. \addx{Only if} a match is possible, \add{that is} \mbox{\texttt{p = t}} is true in line \ref{lst:rk:p=t}, an exact match of \texttt{P[0..m-1]} and \texttt{T[s..s+m-1]} is performed in the then-branch by calling \texttt{match} in line \ref{lst:rk:call:match}. This exact match can update \texttt{R} with a valid shift depending on the outcome of the comparison. After the conditional, \texttt{t} at shift \texttt{s} is updated to the hash value at shift \texttt{s+1} \add{through} a call to \texttt{update}. \add{Subsequently,} the iteration continues at the next shift.
  
The pre- and post\add{-}processing before and after the main iteration are performed in lines \ref{lst:rk:call:hp}--\ref{lst:rk:call:pow} and lines \ref{lst:rk:uncall:pow}--\ref{lst:rk:uncall:hp}, respectively. In 
pre\add{-}processing, the hash values \texttt{p} and \texttt{t} are initialized by the calls to \texttt{init\char`_h} as defined in Eqs.~(\ref{eq:hashp}, \ref{eq:hasht})\add{,} and \texttt{h} is precomputed \add{using} \texttt{h +=}\(\sb{\texttt{q}}\) \texttt{d}\(\sp{\texttt{m-1}}\) as defined in Eq.~(\ref{eq:h}). 
Post\add{-}processing, a typical idiom of reversible programming to zero-clear variables, uncomputes the 
values {of \texttt{h}, \texttt{t} and \texttt{p}}. \add{Notably,} the call {of \texttt{init\_h}} in line \ref{lst:rk:call:ht} and the uncall {of \texttt{init\_h}} in line \ref{lst:rk:uncall:ht} have different indices. 
The uncall uses the fact that the last value of \texttt{t} is the hash value of \texttt{T} \add{in} the \emph{last shift} \texttt{n-m} {({\cf}, 
Eq.~(\ref{eq:hashtfinal}))}, \add{whereas} the initial value of \texttt{t} is the hash value of \texttt{T} \add{in} the \emph{first shift} \texttt{0}.

Procedure \texttt{update} computes the \add{subsequent} hash value\add{,} $t_{s+1}$ from $t_s$\add{,} 
in constant time using Eq.~(\ref{eq:phismod}).
Line \ref{fig:rabin-karp:update:remove} removes the high-order radix-\texttt{d} digit from \texttt{t}, line \ref{fig:rabin-karp:update:shift} multiplies \texttt{d}, which shifts the radix-\texttt{d} number left by \add{a one-digit} position, and line \ref{fig:rabin-karp:update:add} adds the low-order radix-\texttt{d} digit. All \add{the} operations \addx{are}
modulo~\texttt{q}.

Procedure \texttt{init\char`_h} computes in an initially zero-cleared variable \texttt{x}\add{,} the hash value $\psi(X[i..j-1])$ of a substring $X[i..j-1]$\add{,} 
using Horner's rule as shown in Eq.~(\ref{eq:psi}). In addition to \texttt{x}, the indices \texttt{i} and \texttt{j},
array \texttt{X}, 
alphabet size \texttt{d}, and 
modulus \texttt{q} \add{are used} as \add{inputs}.
The hash values $p=\psi(P[0..m-1])$ and $t_0=\psi(T[0..m-1])$ are %
computed in $\Theta(m)$.

\paragraph{{Space, Time}, and Reversibilization}

Regarding the space consumption of the program, 
no extra arguments are passed to procedure \texttt{rabinkarp} to  \add{maintain} garbage values. All local variables are allocated and deallocated in the body of the procedures, and neither stacks nor arrays are allocated. 
Therefore, no additional space is required compared \add{with} the irreversible original of the program\addx{~\cite{CLRS09:intro-alg}}.

The pre- and post\add{-}processing times are bounded by the running time of \texttt{init\char`_h}, which is $\Theta(m)$. The worst-case running time of matching is $\Theta((n-m+1)m)$, which is the same as \add{that of} the irreversible Rabin--Karp algorithm. In the worst case, the iteration repeats the $\Theta(m)$ steps of the exact match $n-m+1$ times. 
The shift \texttt{s} increases monotonically in the main iteration of procedure \texttt{rabinkarp} and a limited number of elements $T[s..s+m-1]$ of the text is accessed at each iteration. Thus, 
\add{just like the irreversible original,} 
the proposed Rabin--Karp algorithm 
computes over bounded space.

\add{Notably,} no extra space is required by the two reversible \addx{string-matching} programs. 
The speedup gained by allowing deletion as a computational resource 
(at the expense of additional heat dissipation \add{owing} to %
entropy increase~\cite{DeVos:20,Krakovsky:21})
is
a constant factor of \add{approximately} two (the uncall in procedure \texttt{match} in
case of an unsuccessful comparison). \add{Thus,} if \add{reversibility were removed} and the two programs were turned back into C-like imperative programs, speed, but no space \add{would be obtained.} The reversible programs\add{,} developed 
in this study\add{,} are in contrast to what one 
obtains from \add{mechanically reversibilizing} the string-matching algorithms using Bennett's method~\cite{Bennett:73}. \add{Bennett's method} has the advantage that it can be applied to any irreversible program; \add{however,} it requires additional space proportional to the length of the computation \add{because of} the recording of a trace {(for reversible simulations with improved space efficiency\add{,} see
\eg,~\cite{PePh15})}. \add{Therefore}, the entire run of a \addx{reversibilized} string matcher, \add{including all hash calculations and mismatches,} is recorded in the computation history. Moreover, \add{because} of the uncompute phase added by Bennett's method to clean up the trace after finding all valid shifts, the reversible string-matching program produced by \add{this} \addx{reversibilization} method is no longer 
computing over bound space.

\section{Conclusion}
\label{sec:conclusion}

In this study, we have designed
and implemented the reversible versions of a naive string\add{-}matching algorithm and the Rabin--Karp algorithm.
We have shown that the hash\add{-}update function with a reasonable restriction in the reversible Rabin--Karp algorithm is injective.
The reversible versions of a naive string matching algorithm and the Rabin--Karp algorithm have the same asymptotic running time $\mathrm{O}((n-m+1)m)$ and space
usage $\mathrm{O}(n+m)$\add{,} \addx{as}
the irreversible \add{versions}. Because 
the original inputs preserved over the runs are not regarded as garbage, both reversible algorithms are clean, {\ie}\ they produce no garbage as output. The 
main iteration
monotonically increases the shift \texttt{s} from 0 to \texttt{n-m+1}. Thus, the proposed Rabin--Karp algorithm 
is a streaming algorithm.
This property 
cannot be automatically obtained by the reversibilization of Bennett~\cite{Bennett:73} and Lange--McKenzie--Tapp~\cite{LaMT00}.
It is expected that the reversible algorithms developed in this study can be a part of other algorithms, and the insights gained from constructing the reversible algorithms can be applied in future program developments. 
Verifying conventional programs 
is not always an easy task (\eg, \cite{Angelis:22}), and %
{exploring} the reversibilization and {mechanical} verification of reversible programs will be a challenge \add{in} future work.

\paragraph*{Acknowledgments}

{The authors would like to thank} Geoff Hamilton, Temesghen Kahsai, and Maurizio Proietti for their kind invitation to contribute
to the workshop HCVS/VPT at ETAPS week 2022 in Munich
{and the anonymous reviewers for the useful feedback.}
The idea for the reversible algorithms in this \add{study} was \add{brewed} in joint work with Kaira Tanizaki and Masaki Hiraku.
The second author was supported by JSPS KAKENHI Grant Number 22K11983.

\end{document}